\def\1{\mathbbm{1}}
\def\bb{\begin{equation}}
\def\ee{\end{equation}}
\newtheorem{theorem}{Theorem}[section]
\newtheorem{lemma}[theorem]{Lemma}
\newtheorem{observe}{Observation}[section]
\newtheorem{remark1}[observe]{Remark}
\newtheorem{formula}{Formula}[section]
\newenvironment{remark}{\begin{remark1} \rm}{\end{remark1}}
\def\qed{\hfill$\blacksquare$\\} \renewenvironment{proof}{\noindent {\bf 
Proof.}}{\qed}
\def\R{\mathbbm{R}}
\def\N{\mathbbm{N}}
\def\E{\mathbbm{E}}
\def\var{\mathrm{var}}
\def\cov{\mathrm{cov}}
\title{A Fast Linear Regression via SVD and Marginalization}
\author{Philip Greengard\thanks{Columbia University}, 
Andrew Gelman\thanks{Department of Statistics and Department of Political Science, 
Columbia University},
Aki Vehtari\thanks{Department of Computer Science, Aalto University}}
\date{November 9, 2020}
\begin{document}

\maketitle

\begin{abstract}
We describe a numerical scheme for evaluating the posterior 
moments of Bayesian linear regression models with partial 
pooling of the coefficients. The principal analytical tool of the evaluation is a change of 
basis from coefficient space to the space of singular vectors 
of the matrix of predictors. After this change of basis
and an analytical integration, we reduce the problem of 
finding moments of a density over $k + m$ dimensions, to 
finding moments of an $m$-dimensional density, where $k$ is the
number of coefficients and $k+m$ is the dimension of the posterior.
Moments can then be computed using, for example, MCMC, the trapezoid 
rule, or adaptive Gaussian quadrature. 
An evaluation of the SVD of the matrix of predictors is the 
dominant computational cost and is performed once during the 
precomputation stage. We demonstrate numerical results of the algorithm. 
The scheme described in this paper generalizes naturally 
to multilevel and multi-group hierarchical regression models
where normal-normal parameters appear.
\end{abstract}

\section{Introduction}

Linear regression is a ubiquitous tool for statistical 
modeling in a range of applications including social
sciences, epidemiology, biochemistry, and environmental sciences 
(\cite{bda, 5, 15, 25, 35}). 

A common bottleneck for applied statistical modeling workflow is the 
computational cost of model evaluation. 
Since posterior distributions in statistical models 
are often high dimensional and computationally 
intractable, various techniques have been used to approximate 
posterior moments. 
Standard approaches often involve a variety of techniques including
Markov chain Monte Carlo (MCMC) or using a suitable approximation
of the posterior.

In this paper, we describe an approach for reducing the computational costs 
for a particular
class of regression models --- those that contain parameters $\theta \in \R^k$
such that $\theta$ has a normal prior and normal likelihood. 
These models represent only a subset of regression models that 
appear in applications. 
We focus our attention in this paper on normal-normal 
models because they have well known 
analytical properties and are more computationally tractable
than the vast majority of multilevel models. 
A broader class of models, including logistic regression,
contain distributions that are less amenable to 
the techniques of this paper and will require other
analytical and computational tools. 
Mathematically, marginalization of normal-normal parameters is 
well-known and has been applied to the posterior by, for example, 
\cite{lindley}. Our contribution is to provide a stable, accurate,
and fast algorithm for marginalization. 

The primary numerical tool used in the 
algorithm is the singular value decomposition (SVD) of the data
matrix. As a mathematical and statistical tool, SVD has been known since 
at least 1936 (see \cite{eckart}). Use of the
SVD as a practical and efficient numerical algorithm only started gaining
popularity much later, with the first widely used scheme
introduced in \cite{golub}. Due in large part to advances in 
computing power, use of the SVD as a tool in applied 
mathematics, statistics, and data science has been gaining significant
popularity in recent years, however efficient
evaluation of SVDs and related matrix decompositions is still an active 
area of research (see \cite{hastie}, \cite{halko}, \cite{shamir}). 

Similar schemes to ours are used in the 
software packages lme4 (\cite{lme4}) and INLA (\cite{inla}).
There are several differences between the problems they address and
their computational techniques, and those that we shall discuss here. 
While lme4 finds maximum likelihood and restricted maximum likelihood
estimates, our goal is to find posterior moments. 
The software package INLA uses Laplace approximation on the posterior 
for a general choice of likelihood functions, whereas 
our algorithm is focused on fast and accurate solutions for only 
a particular class of densities: those with normal-normal parameters.

The approach presented in this paper analytically marginalizes the normal-normal 
parameters of a model using a change of variables.
After marginalization, posterior moments can be computed using 
standard techniques on the lower-dimensional density. 
In particular, for a model that contains $k+m$ total variables, $k$
of which are normal-normal, our scheme converts the problem 
of evaluating expectations of a density in 
$k+m$ dimensions to finding expectations of an $m$-dimensional density.
After marginalization, we evaluate the $m$-dimensional 
posterior density in $O(k)$ operations. Without the change of 
variables, standard evaluation of marginal densities that relies on 
determinant evaluation requires at least $O(k^3)$ operations. 

We illustrate our scheme on the problem of evaluating the 
marginal expectations of the unnormalized density
\bb\label{10}
\hspace*{-3em}
q(\sigma_1, \sigma_2, \beta) = 
\sigma_1^{-(k+1)} \sigma_2^{-n}\exp\left(
- \gamma\log^2(\sigma_1) - \frac{\sigma_2^2}{2} 
- \frac{\|X\beta - y\|^2}{2\sigma_2^2}
- \frac{\|\beta \|^2}{2\sigma_1^2} \right),
\ee
where $\gamma>0$ is a constant, $\sigma_1, \sigma_2 > 0$, and $\beta \in \R^k$. 
We assume that $X$ is a fixed $n \times k$ matrix, $y \in \R^n$ is fixed, 
and the normalizing constant of (\ref{10})
is unknown. For fixed $n,k \in \N$, the algorithm is nearly identical 
when $X$ is an $n \times k$ matrix to when $X$ is a $k \times n$ 
matrix. In the case where $k \gg n$, see \cite{kwon} for a similar 
approach. 
Using the standard notation of Bayesian models, density $q$ is the 
unnormalized posterior of the model
\bb\label{16}
\begin{split}
&\sigma_1 \sim \text{lognormal}(0, \sqrt{\gamma})\\
&\sigma_2 \sim \text{normal}^+(0, 1)\\
&\beta \sim \text{normal}(0, \sigma_1)\\
&y \sim \text{normal}(X\beta, \sigma_2).
\end{split}
\ee
In Appendix A, we include Stan code that can be used to 
sample from density (\ref{10}). 

Statistical model (\ref{16}) is a standard model of Bayesian statistics
and appears when seeking to model an outcome, $y$, as a linear combination 
of related predictors, the columns of $X$. In \cite{5}, these models 
are described in detail and are used in the estimation of the 
distribution of radon levels in houses in Minnesota.

Density (\ref{10}) is also closely related to posterior densities that 
appear in genome-wide association studies (GWAS; see \cite{zhu},
\cite{1101}, \cite{azevedo}) which can be used to identify
genomic regions containing genes linked with a specific trait, such 
as height. Using the 
notation of (\ref{10}), each row of matrix $X$ corresponds to a person, each 
column of $X$ represents a genomic location, entries of $X$ 
indicate genotypes, and $y$ corresponds to the trait. Due to technical 
advances in genome sequencing over the last ten years, it is now
feasible to collect large amounts of sequencing data. 
GWAS models can contain data on up to millions of people 
and often between hundreds and thousands of genome locations 
(see \cite{karlsson}). 
As a result, efficient computational tools are required for model evaluation. 

The number of operations required by the scheme of this paper 
scales like $O(nk^2)$ with a small constant. 
The key analytical tool is a change of 
variables of $\beta$ such that the terms, 
\bb
-\frac{1}{2\sigma_2^2}\|X\beta - y\|^2 - \frac{1}{2\sigma_1^2}\| \beta \|^2,
\ee
in (\ref{10}) are converted to a diagonal quadratic form in $\R^k$. 
After that change of variables, expectations over $q$ are 
analytically converted from integrals over $\R^{k+2}$ to 
integrals over $\R^2$. The remaining $2$-dimensional integrals 
can be computed to high accuracy using classical numerical 
techniques including, for example, adaptive Gaussian quadrature 
or even the $2$-dimensional trapezoid rule.

The schemes used to evaluate the expectations of 
(\ref{10}) generalize naturally to evaluation of expectations of 
multilevel and multigroup posterior distributions including, for
example, the two-group posterior of the form,
\bb\label{1220}
\hspace{-3em}
q(\sigma_1,\sigma_2,\sigma_3, \mu, \beta) 
= \exp\left( -\frac{1}{2\sigma_1^2} 
\|X\beta - y\|^2 - \frac{1}{2\sigma_2^2}\sum_{i=1}^{k_1} (\mu - \beta_i)^2 
- \frac{1}{2\sigma_3^2}\sum_{i=k_1 + 1}^{k_1 + k_2} \beta_i^2 
\right),
\ee
where $X$ is a $n \times k$
matrix, $y \in \R^n$,
$k_1$ and $k_2$ are non-negative integers satisfying $k_1 + k_2 = k$, and
the vector $t \in \R^m$.

For models where $m$ is large, MCMC can be used to evaluate
the $m$-dimensional expectations, with, for example, Stan \cite{stan}.
The $m$-dimensional distribution has two qualities that make it 
preferable to its high-dimensional counterpart. First, it requires 
only $O(k)$ operations to evaluate the integrand, and second, the 
geometry of the $m$ dimensional marginal distribution will allow for 
more efficient sampling. 

The structure of this paper is as follows. In the following section 
we describe the analytic integration that transforms (\ref{10}) from 
a $k+2$-dimensional problem to a $2$-dimensional problem. 
Section \ref{s30} includes formulas that will allow for the 
evaluation of posterior moments using the $2$-dimensional density.
In Sections \ref{s45} and \ref{s47} we provide formulas 
for evaluating covariances of (\ref{10}).
In Section \ref{s50}, we discuss the numerical results of 
the implementation of the algorithm. 
Conclusions and generalizations of the algorithm of  
this paper are presented in Section \ref{s55}. 
Appendix A provides Stan code that can be used to sample
from (\ref{10}), and Appendix B includes proofs of the 
formulas of this paper. 
\section{Analytic Integration of \texorpdfstring{$\beta$}{b}}\label{s20}
In this section, we describe how we analytically marginalize the 
normal-normal parameter $\beta$ of density (\ref{10}). 
We include proofs of all formulas in Appendix B. 

We start by in marginalizing $\beta$ using a
change of variables that converts
the quadratic forms in (\ref{10}) into diagonal quadratic forms. 
The resulting integral in the new
variable, $z$, is Gaussian, and the coefficients of $z_i$
and $z_i^2$ are available analytically. The change of variables 
is given by the right orthogonal 
matrix of the singular value decomposition (SVD) of $X$. That is,
we set 
\bb\label{735}
z = V^t\beta
\ee
where the SVD of $X$ is
\bb\label{745}
X = UDV^t.
\ee
We define $\lambda_i$ to be the $i^\text{th}$ element of the
diagonal of $D$. The elements of diagonal need not be sorted. 
After this change of variables, we obtain the following 
identity for the last two terms of (\ref{10}). A proof can
be found in Lemma \ref{170} in Appendix B. 
\begin{formula}
\bb\label{330_0}
\hspace{-3em}
-\frac{1}{2\sigma_2^2}\| X\beta - y\|^2 - \frac{1}{2\sigma_1^2} \|\beta\|^2 = 
a_0 + \sum_{i=1}^k 
a_{2,i}\left(z_i - \frac{a_{1,i}}{2a_{2,i}}\right)^2 
+ \frac{a_{1,i}^2}{4a_{2,i}}
\ee
where
\bb\label{340_0}
a_{2,i} = \frac{\lambda_i^2}{2\sigma_2^2} + \frac{1}{2\sigma_1^2},
\ee
\bb\label{350_0}
a_{1,i} = \frac{w_i}{\sigma_2^2},
\ee
and
\bb\label{360_0}
a_{0} = -\frac{y^ty}{2\sigma_2^2}
\ee
where
\bb\label{110_0}
w = V^tX^ty.
\ee
\end{formula}
After performing the change of variables $z=V^t\beta$ and using 
(\ref{330_0}), we now have an 
expression for density (\ref{10}) in a form that allows us to 
use the well-known properties of a Gaussian with diagonal covariance. 
The following identity uses these properties and provides 
a formula for analytically reducing expectations of (\ref{10}) 
from integrals over $k+2$ dimensions
to integrals over $2$ dimensions. After the formula is applied,
we have a new density, $\tilde{q}$, over only $2$ dimensions. 
See Theorem \ref{290} in Appendix B for a proof. 
\begin{formula}\label{290_0}
For all $\sigma_1,\sigma_2>0$ we have
\bb\label{310_0}
\int_{\R^k} q(\sigma_1, \sigma_2, \beta) d\beta = 
\tilde{q}(\sigma_1, \sigma_2)
\ee
where $\tilde{q}(\sigma_1, \sigma_2)$ is defined by the formula
\bb\label{1270_0}
\hspace{-3em}
\tilde{q}(\sigma_1, \sigma_2) = 
\sigma_1^{-(k+1)} \sigma_2^{-n}
\exp\left( 
-\gamma\log^2(\sigma_1) - \frac{\sigma_2^2}{2} 
+a_0 + \sum_{i=1}^k \frac{a_{1,i}^2}{4a_{2,i}}
\right)
\prod_{i=1}^k\frac{1}{\sqrt{2a_{2,i}}}
\ee
where $a_{2,i}$ is defined in (\ref{340_0}), $a_{1,i}$ is defined in 
(\ref{350_0}), $a_0$ is defined in (\ref{360_0}), and $\gamma$ is a 
constant. 
\end{formula}
\begin{remark}
Certain Bayesian models might contain correlated priors on 
$\beta$ that will result in posteriors such as (\ref{1215}) of  Section~\ref{s45}. 
For such models, we perform the change of variables that 
uses the fact that two diagonal forms
over $\beta$ can be simultaneously diagonalized.
\end{remark}

We include in Figure \ref{715} a plot
of the density of $q$ as a function of $\sigma_1$ and 
$\beta_1$ for fixed $\sigma_2$ and randomly chosen $X$ and $y$.
Figure \ref{710} shows a plot of $q$ as a function of 
$\sigma_2$ and $\beta$ for fixed $\sigma_1$.
Figure \ref{716} provides an illustration of $\tilde{q}$, obtained 
after the change of variables and marginalization described 
in this section.

%
%
\begin{figure}[!ht]
\centering
\begin{minipage}{.75\textwidth}
  \includegraphics[width=\linewidth]{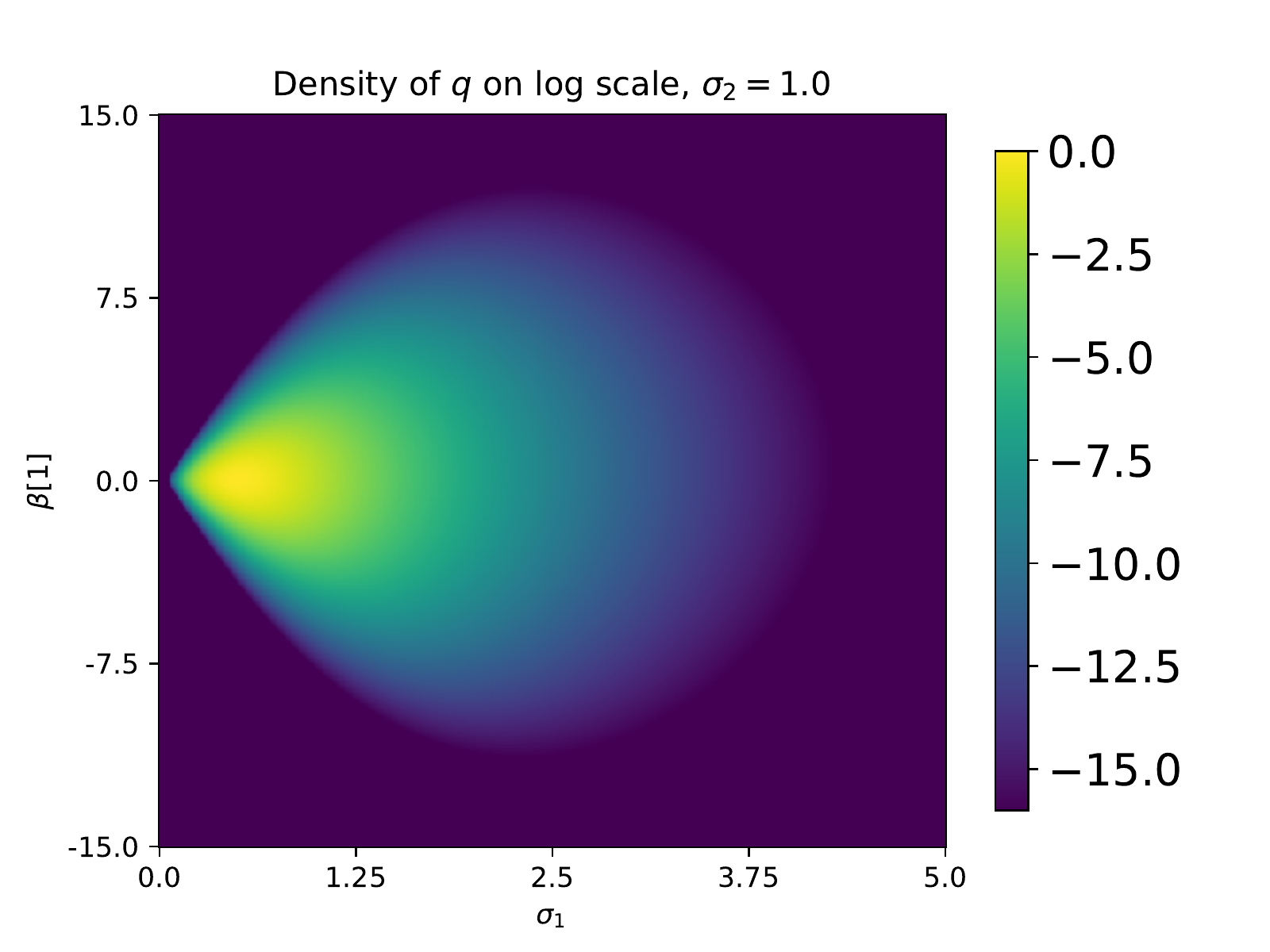}
  \caption{\em Density of $q$ (see (\ref{10})) with respect to
$\sigma_1$ and $\beta_1$, where $\gamma=8$, 
$n=100$, $k=10$, and data were randomly generated.}
  \label{715}
\end{minipage}%

\begin{minipage}{.75\textwidth}
  \includegraphics[width=\linewidth]{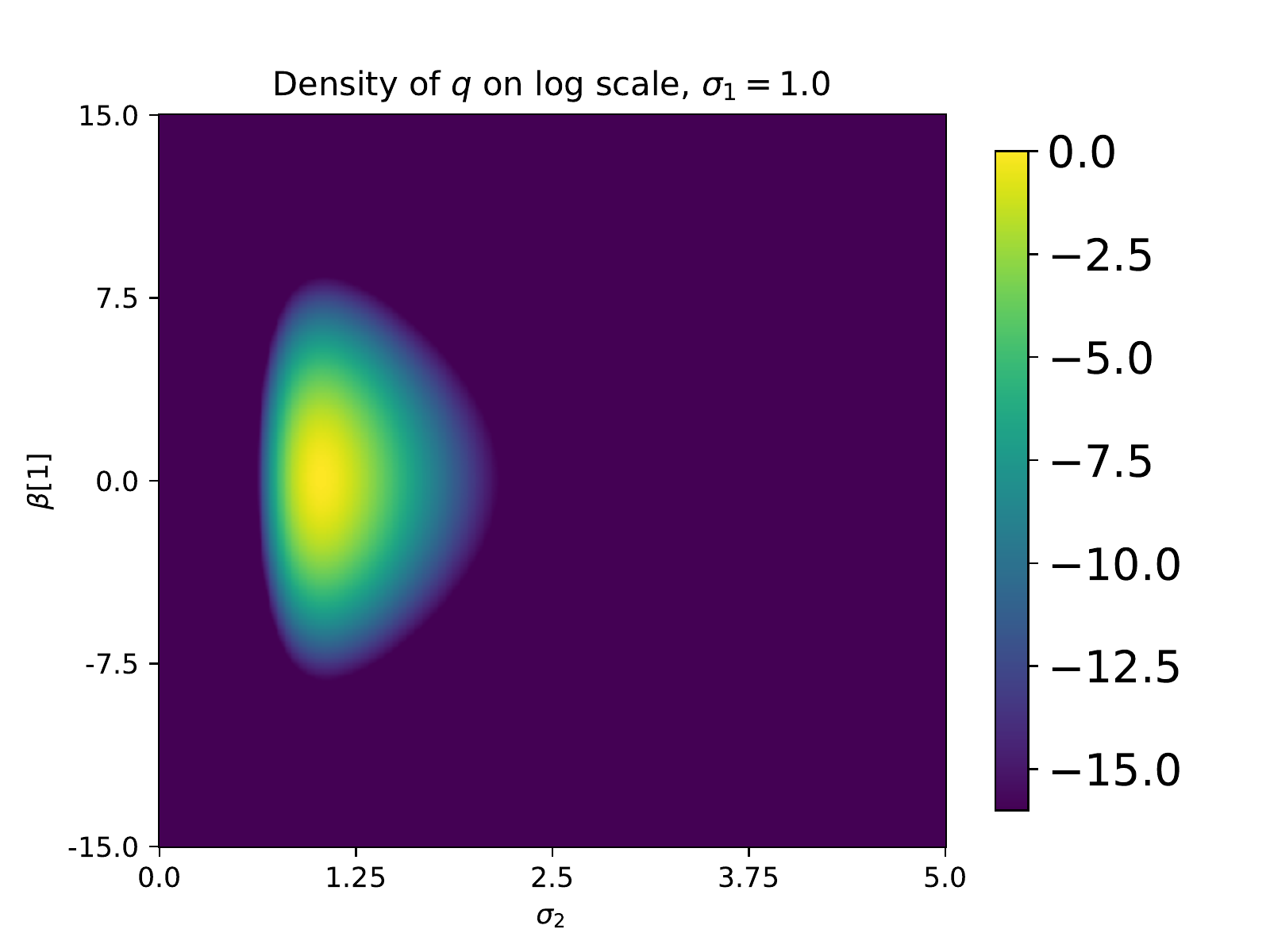}
  \caption{\em Density of $q$ (see (\ref{10})) with respect to 
$\sigma_2$ and $\beta_1$, for the same parameters as 
Figure \ref{715}.}
  \label{710}
\end{minipage}
\end{figure}

%
%
\begin{figure}[!ht]
\centering
  \includegraphics[width=\textwidth]{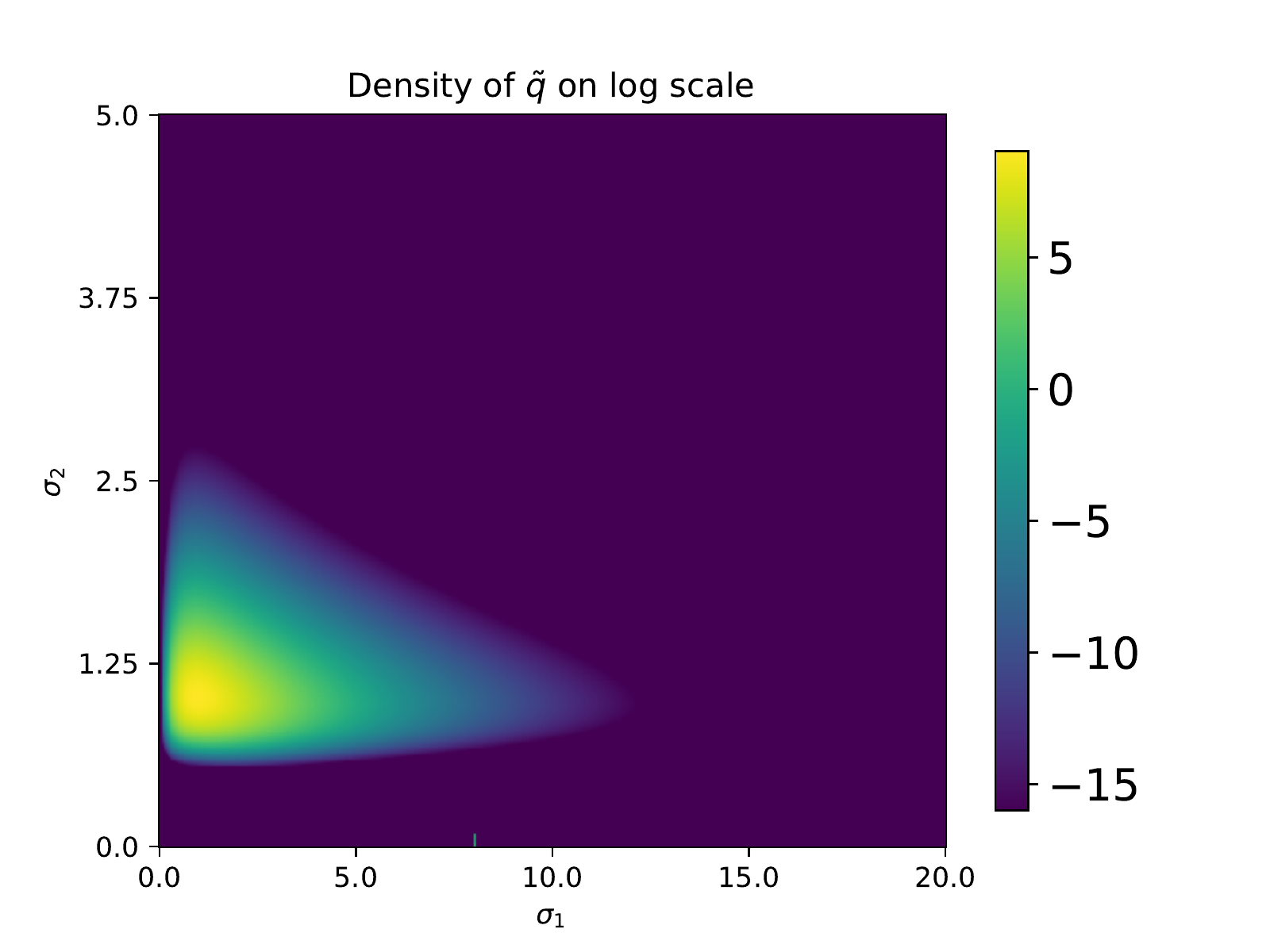}
  \caption{\em Density of $\tilde{q}$ (see (\ref{1270_0})) using 
the same $q$ as Figure \ref{715}, where $n=100$, $k=10$, and 
data were randomly generated.}
  \label{716}
\end{figure}
\section{Evaluation of Posterior Means}\label{s30}
Now that we have reduced the $k+2$-dimensional density $q$
to the $2$-dimensional density $\tilde{q}$, it remains to 
recover the posterior moments of $q$ using $\tilde{q}$. 
We first observe that moments of $\sigma_1$ and $\sigma_2$
with respect to $q$ are equivalent to moments of $\sigma_1$ and 
$\sigma_2$ over $\tilde{q}$. That is,
\bb
\E_{q}(\sigma_1) = \E_{\tilde{q}}(\sigma_1)
\ee
and 
\bb
\E_{q}(\sigma_2) = \E_{\tilde{q}}(\sigma_2).
\ee

As for moments of $\beta$, we use (\ref{1270_0}) and
standard properties of Gaussians to obtain the 
following formula. 
\begin{formula}\label{380_0}
For all $\sigma_1,\sigma_2>0$,
\bb\label{390_0}
\int_{\R^k} z_i q(\sigma_1, \sigma_2, \beta) d\beta = 
\frac{a_{1,i}}{2a_{2,i}}
\tilde{q}(t)
\ee
where $q$ is defined in (\ref{10}), $\tilde{q}$ is defined in 
(\ref{1270_0}),
$a_{2,i}$ is defined in (\ref{340_0}), and $a_{1,i}$ is defined in 
(\ref{350_0}). 
\end{formula}
As an immediate consequence of (\ref{390_0}), we are 
able to evaluate the posterior expectation of $z$
as an expectation of a $2$-dimensional density:
\bb\label{810}
\E_{q}(z_i) = \E_{\tilde{q}}(\frac{a_{1,i}}{2a_{2,i}}).
\ee
We then transform
those expectations back to expectations over the desired basis, 
$\beta$ using the matrix $V$ computed in (\ref{745}). 
Specifically, using linearity of expectation and (\ref{810}), we know
\bb\label{570}
\begin{split}
\E_{q}((\beta_1,\dots,\beta_k)^t) 
&= \E_{q}(VV^t(\beta_1,\dots,\beta_k)^t)\\
&= V\E_{q}(V^t(\beta_1,\dots,\beta_k)^t)\\
&= V\E_{q}((z_1,\dots,z_k)^t)\\
&= V\E_{\tilde{q}}\left(\left(
\frac{a_{1,1}}{2a_{2,1}},\dots,
\frac{a_{1,k}}{2a_{2,k}}
\right)^t\right).
\end{split}
\ee
\section{Covariance of \texorpdfstring{$\beta$}{b}}\label{s45}
In addition to facilitating the rapid evaluation of posterior means,
the change of variables described in Section \ref{s20} is also 
useful for the evaluation of higher moments. 

Equation (\ref{330_0}) shows that after the change of variables
from $\beta$ to $z$, the resulting density is a Gaussian 
in $z$ with a diagonal covariance matrix. 
Additionally, for each $z_i$, using equation (\ref{330_0}) and 
standard properties of Gaussians, we have the following identity.
\begin{formula}\label{385}
For all $\sigma_1, \sigma_2>0$, we have
\bb\label{390_1}
\int_{\R^k} (z_i-\mu_{z_i})^2 q(\sigma_1, \sigma_2, \beta) d\beta = 
(2a_{2,i})^{-1}\tilde{q}(\sigma_1, \sigma_2)
\ee
where $\mu_{z_i}$ is the expectation of $z_i$, $\tilde{q}$ is 
defined in (\ref{1270_0}), and $a_{2,i}$ is defined in (\ref{340_0}). 
\end{formula}
The second moments of the posterior of $\beta$ are obtained as a
linear transformation of the posterior variances of $z$. In particular, 
denoting the expectation of $\beta$ by $\mu_\beta$ and the expectation
of $z$ by $\mu_z$, we have
\bb\label{1010}
\begin{split}
\E((\beta-\mu_\beta)(\beta-\mu_\beta)^t) 
&= VV^t\E((\beta-\mu_\beta)(\beta-\mu_\beta)^t)VV^t \\
&= V\E(V^t(\beta-\mu_\beta)(\beta-\mu_\beta)^tV)V^t \\
&= V\E((z-\mu_z)(z-\mu_z)^t)V^t.
\end{split}
\ee
We observe that due to the independence of all $z_i$, 
\bb
\E((z-\mu_z)(z-\mu_z)^t)
\ee
is diagonal and we can therefore evaluate the 
$k \times k$ posterior covariance matrix of $\beta$ by evaluating 
 $\var(z_i)$ for $i=1,...,k$ and then applying two 
orthogonal matrices. Specifically, 
combining Formula \ref{385} and (\ref{1010}), we obtain 
\bb\label{1020}
\begin{split}
\cov(\beta) 
&= V\E_{\tilde{q}}\left(\left((2a_{2,1})^{-1},...
,(2a_{2,k})^{-1}\right)^t\right)V^t. 
\end{split}
\ee
\section{Variance of \texorpdfstring{$\sigma_1$}{s1} and \texorpdfstring{$\sigma_2$}{s2}}\label{s47}
Higher moments of $\sigma_1$ and $\sigma_2$ with respect to $q$ 
can be evaluated directly as higher moments of $\sigma_1$ 
and $\sigma_2$ with respect to $\tilde{q}$.
That is, for all $j \in \{2,3,...,\}$, we have
\bb
\E_{q}((\sigma_1 - \mu_{\sigma_1})^j) = 
\E_{\tilde{q}}((\sigma_1 - \mu_{\sigma_1})^j)
\ee
and
\bb
\E_{q}((\sigma_2 - \mu_{\sigma_2})^j) = 
\E_{\tilde{q}}((\sigma_2 - \mu_{\sigma_2})^j).
\ee
In particular, for $j=2$, we obtain
\bb
\var_{q}(\sigma_1) = \var_{\tilde{q}}(\sigma_1)
\ee
and
\bb
\var_{q}(\sigma_2) = \var_{\tilde{q}}(\sigma_2).
\ee

\begin{algorithm}[H]\label{700}
\DontPrintSemicolon
\SetAlgoLined
Compute SVD of matrix $X$\;
Compute $w$ (see (\ref{110_0}))\;
Compute $V^t\1$ (see (\ref{350_0}))\;
Construct evaluator for density $\tilde{q}$ of (\ref{1270_0})\;
Evaluate first and second moments with respect to $\tilde{q}$:
$\E_{\tilde{q}}(\sigma_1),
\E_{\tilde{q}}(\sigma_2), 
\E_{\tilde{q}}(\frac{a_{1,i}}{2a_{2,i}})$\;
Compute $\E(\beta)$ via formula (\ref{570})\;
 \caption{\em Evaluation of posterior expectations of normal-normal models}
\end{algorithm}

\section{Numerical Experiments}\label{s50}
Algorithm \ref{700} was implemented in Fortran. 
We used the GFortran compiler on a 
2.6 GHz 6-Core Intel Core i7 MacBook Pro. All examples were run 
in double precision arithmetic. The matrix $X$ and 
vector $y$
were randomly generated as follows. Each entry of $X$ was generated
with an independent Gaussian with mean $0$ and variance $1$. 
The vector $y$ was created by first randomly generating a vector 
$\beta \in \R^k$, each entry of which is an independent Gaussian 
with mean $0$ and variance $1$. The vector $y$ was set to the 
value of $X\beta + \epsilon$ where $\epsilon \in \R^n$ contains 
standard normal iid entries. We generated $y$ this way in order 
to ensure that the $\E(\beta_i)$ were not all small in magnitude. 
We set $\gamma$ of (\ref{10}) to 8.

In Table \ref{7801} and Figure \ref{2270}, 
we compare the performance of Algorithm \ref{700} to two 
alternative schemes for computing posterior expectations --- 
one in which we
analytically marginalize via equation (\ref{310_0}) and 
then integrate the $2$-dimensional density via MCMC using Stan.
In the other, we use Stan's MCMC sampling
on the full $k+2$ dimensional posterior. When using MCMC with Stan, we took 10,000 posterior draws. 
In Table \ref{7801} and Figure \ref{2270} 
we denote Algorithm \ref{700} by ``SVD-Trap''. The algorithm that 
uses Stan on the marginal $2$-dimensional density is labeled
``SVD-MCMC'', and ``MCMC'' corresponds to the algorithm that uses 
only MCMC sampling in Stan.  

In the appendix, we include Stan code to sample from 
the marginal density $\tilde{q}$ of (\ref{1270_0}). 

\begin{remark}
In the numerical integration stage of algorithm \ref{700}, we use the
trapezoid rule with $200$ nodes in each direction. 
Because the integrand is smooth and vanishes near the boundary, 
convergence of the integral is super-algebraic when using the
trapezoid rule (see \cite{stoer}). A rectangular grid with $200$ 
points in each direction is satisfactory for obtaining approximately
double precision accuracy. In problems with large numbers of  
non-normal-normal parameters, MCMC algorithms such as Hamiltonian Monte 
Carlo or other methods can be used.
\end{remark}

In Tables \ref{7800} and \ref{7801}, $n$ and $k$ represent the size of 
the $n \times k$ random matrix $X$. 

The column labeled ``max error'' provides the maximum absolute
error of the expectations of $\sigma_1$, $\sigma_2$, and $\beta_i$ for $i \in 
\{1, 2, \dots, n\}$. The true solution was evaluated using trapezoid rule
with $500$ nodes in each direction in extended precision. 

In Table \ref{7800}, 
``Precompute time (s)'' denotes the time in seconds of all computations 
until numerical integration. These times are dominated by 
the cost of SVD (\ref{60}). 
The total time of the numerical integration in addition to 
the matrix-vector product (\ref{570}) is given in 
``integrate time (s).''
The final column of Table \ref{7800}, ``total time (s)'', provides
the total time of precomputation and integration. 

%
\begin{table}[!ht]

  \centering

\resizebox{\columnwidth}{!}{%

\begin{tabular}{lllccc}
   $n$ & $k$  & max error & precompute time (s) 
& integrate time (s) & total (s)\\
  \hline
  $50$ & $5$ & $0.22 \times 10^{-13}$ & $0.01$ & $0.01$ & 0.02 \\
  $100$ & $10$ & $0.26 \times 10^{-13}$ & $0.02$ & $0.01$ & 0.03\\
  $500$ & $20$ & $0.30 \times 10^{-13}$ & $0.04$ & $0.01$ & 0.05\\
  $1000$ & $50$ & $0.34 \times 10^{-13}$ & $0.09$ & $0.03$ & 0.12\\
  $5000$ & $100$ & $0.37 \times 10^{-13}$ & $0.29$ & $0.05$ & 0.34\\
  $10000$ & $500$ & $0.26 \times 10^{-13}$ & $14$ & $0.3$ & 14.2\\
  $10000$ & $1000$ & $0.39 \times 10^{-13}$ & $54$ & $0.6$ & 54.5
\end{tabular}

}
  \caption{\em Scaling of computation times for evaluation of 
expectations of $q$ (see (\ref{10})) using Algorithm \ref{700}}
    \label{7800}
\end{table}

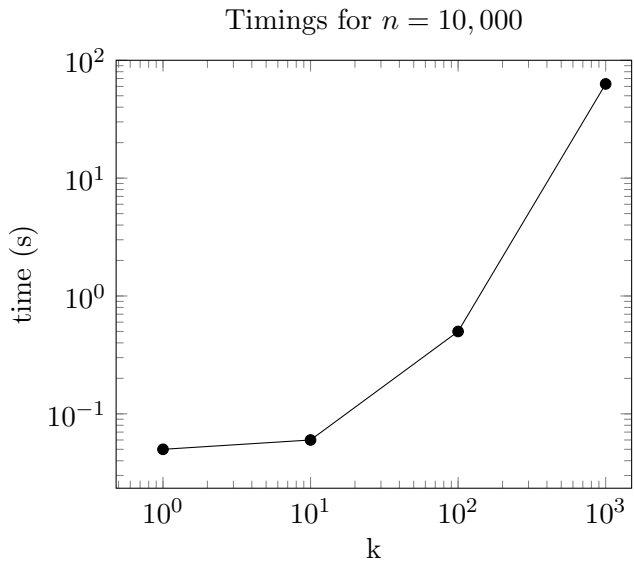
\begin{figure}[!ht]
\centering
\begin{tikzpicture}[scale=1.0]
\begin{axis}[
    xmode=log,
    ymode=log,
    xlabel={k},
    ylabel={time (s)},
    xmax=1500,
    ymax=100,
    title={Timings for $n=10,000$}
]
\addplot[mark=*]
    coordinates 
    {
 (   1.0,   0.05)
 (   10.0,   0.06)
 (   100.0,   0.5)
 (   1000.0,   63.0)
    };
\end{axis}
\end{tikzpicture}
\caption{\em Scaling of computation times for evaluation of posterior 
expectations of $q$ (see (\ref{10})) using Algorithm \ref{700} 
as a function of $k$ with $n=10,000$.}
\label{2260}
\end{figure}

\begin{table}[!h]

  \centering

\resizebox{\columnwidth}{!}{%

\begin{tabular}{ll|lc|lc|lc|}
\multicolumn{2}{c}{} 
& \multicolumn{2}{|c|}{SVD-Trap} 
& \multicolumn{2}{c|}{SVD-MCMC}
& \multicolumn{2}{c|}{MCMC}\\
\hline
   $n$ & $k$  & time (s)  & error & time (s) & error & time (s) & error\\
  \hline
 $100$ & $100$  & $0.16$ & $0.9 \times 10^{-14}$ & $11$ & $0.4 \times 10^{-4}$ & $16$ & $0.1 \times 10^{-1}$ \\
 $200$ & $100$  & $0.16$ & $0.9 \times 10^{-14}$ & $11$ & $0.3 \times 10^{-2}$ & $24$ & $0.8 \times 10^{-2}$ \\
 $500$ & $100$  & $0.23$ & $0.9 \times 10^{-13}$ & $12$ & $0.2 \times 10^{-2}$ & $40$ & $0.8 \times 10^{-2}$ \\
 $1000$ & $100$ & $0.25$ & $0.2 \times 10^{-13}$  & $12$ & $0.6 \times 10^{-3}$ & $88$ & $0.7 \times 10^{-2}$ \\
 $5000$ & $100$ & $0.30$ & $0.4 \times 10^{-13}$ & $14$ & $0.2 \times 10^{-3}$ & $617$ & $0.3 \times 10^{-2}$ \\
 $10000$ & $100$ & $0.65$ & $0.2 \times 10^{-13}$ & $13$ & $0.4 \times 10^{-3}$ & $2552$ & $0.2 \times 10^{-2}$ 
\end{tabular}
}
  \caption{\em Scaling of computation times for evaluation of 
expectations of $q$ (see (\ref{10})) using three different
algorithms: i) SVD-Trap: Algorithm \ref{700}
of this paper, ii) SVD-MCMC: marginalization with MCMC integration of 
$\tilde{q}$ using Stan, and iii) MCMC: full MCMC integration of $q$ 
using Stan.}
    \label{7801}
\end{table}

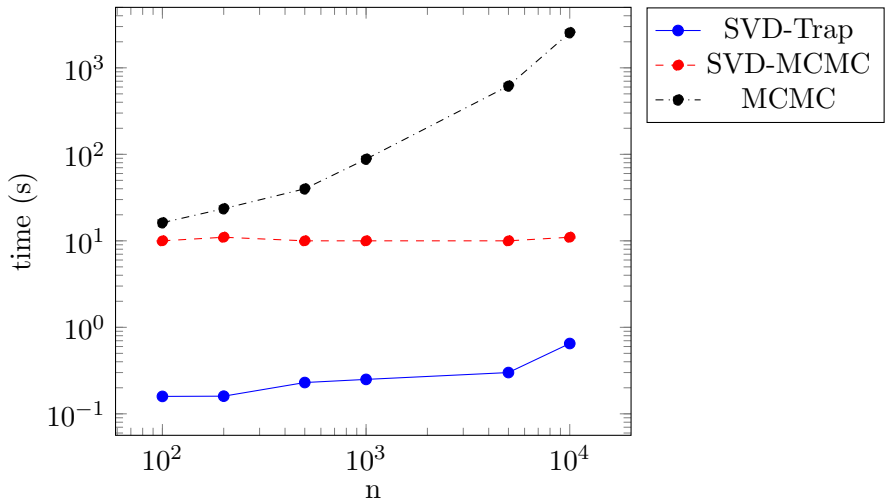
\begin{figure}[h!]
\centering
\begin{tikzpicture}[scale=1.0]
\begin{axis}[
    xmode=log,
    ymode=log,
    xlabel={n},
    ylabel={time (s)},
    xmax=20000,
    ymax=5000,
    legend pos=outer north east,
    title={Timings for $k=100$}
]
\addplot[blue, mark=*]
    coordinates 
    {
 (   100.0,   0.159)
 (   200.0,   0.16)
 (   500.0,   0.23)
 (   1000.0,   0.25)
 (   5000.0,   0.30)
 (   10000.0,   0.65)
    };
\addplot[red, mark=*,dashed]
    coordinates 
    {
 (   100.0,   10)
 (   200.0,   11)
 (   500.0,   10)
 (   1000.0,  10)
 (   5000.0,  10)
 (   10000.0, 11)
    };
\addplot[mark=*,dashdotted]
    coordinates 
    {
 (   100.0,   16.20)
 (   200.0,   23.54)
 (   500.0,   39.94)
 (   1000.0,   88.12)
 (   5000.0,   617.62)
 (   10000.0,   2552.4)
    };
\legend{SVD-Trap,SVD-MCMC,MCMC}    
\end{axis}
\end{tikzpicture}
\caption{\em Scaling of computation times for evaluation of posterior 
expectations of $q$ (see (\ref{10})) as a function of sample size $n$ 
with $k=100$. 
The three algorithms comapred are i) SVD-Trap: Algorithm \ref{700}
of this paper, ii) SVD-MCMC: marginalization with MCMC integration of 
$\tilde{q}$ using Stan, and iii) MCMC: full MCMC integration of $q$ 
using Stan.}
\label{2270}
\end{figure}

\section{Generalizations and Conclusions}\label{s55}
In this paper, we present a numerical scheme for the evaluation 
of the expectations of a particular class of distributions that
appear in Bayesian statistics; posterior dsitributions of 
linear regression problems with normal-normal parameters. 

The scheme presented generalizes naturally to 
several classes of distributions that appear frequently 
in Bayesian statistics. We list several examples of posteriors
whose expectations can be evaluated using this method.\\

\noindent 1. The choice of priors for $\sigma_1$, and $\sigma_2$ in this 
document were log normal and half-normal. This choice did not 
substantially impact the algorithm and can be generalized. 
Adaptive Gaussian quadrature can be used for the numerical integration 
step of the algorithm for a more general choice of
prior on $\sigma_1$ and $\sigma_2$. \\

\noindent 2. Multilevel regression problems with more than 
two levels. \\

\noindent 3. Regression problems with multiple groups such as 
the two-group model with posterior
\bb\label{1210}
\hspace{-3em}
\exp\left( -\frac{1}{2\sigma_1^2} 
\|X\beta - y\|^2 - \frac{1}{2\sigma_2^2}\sum_{i=1}^{k_1} (\mu_1 - \beta_i)^2 
- \frac{1}{2\sigma_3^2}\sum_{i=k_1 + 1}^{k_1 + k_2} \beta_i^2 
\right)
\ee
where $X$ is a $n \times k$ matrix, $y \in \R^n$, and 
$k_1$ and $k_2$ are non-negative integers satisfying $k_1 + k_2 = k$.\\

\noindent 3. Regression problems with correlated priors on $\beta$:
\bb\label{1215}
\hspace{-3em}
\exp\left( -\frac{1}{2\sigma_2^2} 
\|X_1\beta - y\|^2 - \frac{1}{2\sigma_1^2}\|X_2 \beta\|
\right)
\ee

For regression problems with large numbers of non-normal-normal parameters,
marginal expectations can be 
computed using, for example, MCMC in Stan. For such problems, 
the algorithm of this paper would convert an MCMC evaluation 
from $k + m$ dimensions to $m$ dimensions, where $k$ is the number
of normal-normal parameters.

\section{Acknowledgements}
The authors are grateful to Ben Bales and Mitzi Morris for 
useful discussions. 

\appendix
\section{Code}
The following Stan code allows for sampling from the distribution
corresponding to the probability density function 
proportional to (\ref{10}). 
\begin{verbatim}
data {
  int n;
  int k;
  vector[n] y;
  matrix[n,k] X;
}
parameters {
  real<lower=0> sigma1;
  real<lower=0> sigma2;
  vector<offset=0, multiplier=sigma1>[k] beta;
}
model {
  y ~ normal(X*beta, sigma2);
  beta ~ normal(0, sigma1);
  sigma1 ~ lognormal(0, 0.25);
  sigma2 ~ normal(0, 1);
}
\end{verbatim}
The following Stan program samples from the marginal density
$\tilde{q}$ (see (\ref{1270_0})). The data input \verb+yty+
corresponds to $y^ty$ of (\ref{360_0}), \verb+lam+ is the vector
of singular values of $X$, and \verb+w+ is the vector $w$ in 
equation (\ref{110_0}). We include R code for computing \verb+yty+, \verb+lam+, and \verb+w+ after the following Stan code.
\begin{verbatim}
functions {
  real q_tilde_lpdf(real sig1, real sig2, vector w, vector lam, real yty,
                    int k, int n) { 
    vector[min(n,k)] a2 = lam^2/(sig2^2) + 1/(sig1^2);
    real sol = sum(w^2 ./a2)/2/sig2^4 - sum(log(a2))/2 -yty/(2*sig2^2); 
    sol += -min(n,k)*log(sig1) - n*log(sig2);
    return sol;
  } 
}
data {
  int n;
  int k;
  vector[min(n,k)] w;
  vector[min(n,k)] lam;
  real yty;
  matrix[min(n,k),k] V;
}
parameters {
  real<lower=0> sigma1;
  real<lower=0> sigma2;
}
model {
  sigma1 ~ q_tilde(sigma2, w, lam, yty, k, n);
  sigma1 ~ lognormal(0, 0.25);
  sigma2 ~ normal(0, 1);
}
generated quantities {
 vector[k] beta;
 {
   vector[min(n,k)] zvar = 1 ./(2*(lam^2 ./(2*sigma2^2) + 1/(2*sigma1^2)));
   vector[min(n,k)] zmu = w./sigma2^2 .* zvar;
   vector[min(n,k)] z = to_vector(normal_rng(zmu, sqrt(zvar)));
   beta = V * z;
 }
}
\end{verbatim}
The following is a sample of code from R that can be used for the precomputation stage of Algorithm \ref{700}.
\begin{verbatim}
udv <- svd(X)
V <- udv$v
lam <- as.vector(udv$d)
w <- t(V) %*% t(X) %*% y
w <- as.vector(w)
yty <- t(y) %*% y
yty <- yty[1]
\end{verbatim}

\section{Proofs}
In this appendix, we include proofs of the formulas provided
in this paper. For increased readability, this appendix is 
self-contained.
\subsection{Mathematical Preliminaries and Notation}\label{s10}
In this section, we introduce notation and elementary mathematical 
identities that will be used throughout the remainder of this section. 

We define $C \in \R$ by the equation
\begin{equation}\label{20}
C = \int_{\sigma_1\in \R^{+}} \int_{\sigma_2\in\R^+} \int_{\beta \in \R^k}
q(\sigma_1, \sigma_2, \beta)
d\beta d\sigma_2 d\sigma_1,
\end{equation}
and define $\E(\sigma_1)$, $\E(\sigma_2)$, and $\E(\beta_i)$ by the formulas
\bb\label{30}
\E(\sigma_1) = \frac{1}{C} \int_{\sigma_1\in \R^{+}} \int_{\sigma_2\in\R^+} 
\int_{\beta \in \R^k}
\sigma_1 q(\sigma_1, \sigma_2, \beta)
d\beta d\sigma_2 d\sigma_1,
\ee
\bb\label{40}
\E(\sigma_2) = 
\frac{1}{C}
\int_{\sigma_1\in \R^{+}} \int_{\sigma_2\in\R^+} \int_{\beta \in \R^k}
\sigma_2 q(\sigma_1, \sigma_2, \beta)
d\beta d\sigma_2 d\sigma_1,
\ee
and
\bb\label{50}
\E(\beta_i) = 
\frac{1}{C}
\int_{\sigma_1\in \R^{+}} \int_{\sigma_2\in\R^+} \int_{\beta \in \R^k}
\beta_i q(\sigma_1, \sigma_2, \beta)
d\beta d\sigma_2 d\sigma_1
\ee
for $i \in \{1, 2, \dots, k\}$. 

We provide algorithms for the evaluation of (\ref{20}), 
(\ref{30}), (\ref{40}), and (\ref{50}). 

We will be denoting by $\1$ the vector of ones 
\bb
\1 = (1,1,\dots,1)^t.
\ee

We denote the $i^{th}$ component of a vector $v$ by $v_i$.

The following two well-known identities give the normalizing constant and 
expectation of a Gaussian distribution.
\begin{lemma}\label{250}
For all $\sigma_1, \sigma_2 > 0$ we have
\bb
\sqrt{2\pi}\sigma = \int_{\R} e^{\frac{-(\beta - \mu)^2}{2\sigma^2}}d\beta
\ee
\end{lemma}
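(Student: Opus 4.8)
The plan is to reduce the stated identity to the canonical Gaussian integral $\int_{\R} e^{-u^2}\,du = \sqrt{\pi}$ by a linear change of variables, and then to establish that canonical integral by the classical device of squaring it and passing to polar coordinates.

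First I would substitute $u = (\beta - \mu)/(\sqrt{2}\,\sigma)$, so that $\beta - \mu = \sqrt{2}\,\sigma\, u$ and $d\beta = \sqrt{2}\,\sigma\,du$. The exponent $-(\beta-\mu)^2/(2\sigma^2)$ becomes exactly $-u^2$, and since $\sigma > 0$ the orientation and limits of integration over $\R$ are preserved. This turns the right-hand side into $\sqrt{2}\,\sigma \int_{\R} e^{-u^2}\,du$, so the lemma reduces to showing $\int_{\R} e^{-u^2}\,du = \sqrt{\pi}$.

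To evaluate $I = \int_{\R} e^{-u^2}\,du$, I would consider $I^2 = \left(\int_{\R} e^{-u^2}\,du\right)\left(\int_{\R} e^{-v^2}\,dv\right)$ and rewrite it as the planar double integral $\int_{\R}\int_{\R} e^{-(u^2+v^2)}\,du\,dv$. Switching to polar coordinates $u = r\cos\phi$, $v = r\sin\phi$ with Jacobian $r$, this becomes $\int_0^{2\pi}\int_0^{\infty} e^{-r^2}\,r\,dr\,d\phi$. The radial integral evaluates to $\tfrac{1}{2}$ via the substitution $s = r^2$, giving $I^2 = 2\pi \cdot \tfrac{1}{2} = \pi$; since $I > 0$, we conclude $I = \sqrt{\pi}$. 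Combining with the first step yields $\sqrt{2}\,\sigma \cdot \sqrt{\pi} = \sqrt{2\pi}\,\sigma$, as claimed.

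The only genuine subtlety is the passage from the product of one-dimensional integrals to the planar double integral and then to polar coordinates: this requires Tonelli's theorem, which applies because the integrand $e^{-(u^2+v^2)}$ is nonnegative and measurable, so no conditional-convergence issues arise. Everything else is routine algebra. (I note in passing that the quantifier ``for all $\sigma_1,\sigma_2>0$'' in the statement appears vestigial, since the identity involves only the single scale $\sigma$ and the location $\mu$; the proof is valid for every $\sigma>0$ and every $\mu\in\R$.)
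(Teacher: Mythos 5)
Your proof is correct. The paper offers no proof of this lemma at all: it is introduced in Appendix B.1 as one of two ``well-known identities'' about Gaussians and simply cited, so your substitution $u=(\beta-\mu)/(\sqrt{2}\,\sigma)$ followed by the classical squaring-and-polar-coordinates evaluation of $\int_{\R}e^{-u^2}\,du=\sqrt{\pi}$ (with Tonelli justifying the product-to-double-integral step) supplies exactly the standard argument the authors implicitly rely on. Your side remark is also accurate: the quantifier ``for all $\sigma_1,\sigma_2>0$'' is vestigial, evidently copied from the surrounding lemmas' hypotheses, since the identity involves only $\sigma>0$ and $\mu\in\R$ --- precisely the form in which the lemma is later applied (with $\sigma^2 = 1/(2a_{2,i})$, which is positive for all $\sigma_1,\sigma_2>0$) in the proof of Theorem \ref{290}.
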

\begin{lemma}\label{270}
For all $\mu$ in $\R$ and $\sigma > 0$, we have
\bb
\mu\sqrt{2\pi}\sigma = \int_{\R} \beta e^{\frac{-(\beta - \mu)^2}{2\sigma^2}}d\beta
\ee
\end{lemma}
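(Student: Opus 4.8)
The plan is to reduce the claim to the normalization identity of Lemma \ref{250} by a translation of the integration variable that recenters the Gaussian at the origin. First I would substitute $u = \beta - \mu$, so that $\beta = u + \mu$ and $d\beta = du$; because the substitution is a pure translation, the domain of integration remains all of $\R$. The integrand then becomes $(u + \mu)\, e^{-u^2/(2\sigma^2)}$, and by linearity of the integral I would split it into the two pieces $\int_{\R} u\, e^{-u^2/(2\sigma^2)}\, du$ and $\mu \int_{\R} e^{-u^2/(2\sigma^2)}\, du$.

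The first integral vanishes by symmetry: the map $u \mapsto u\, e^{-u^2/(2\sigma^2)}$ is odd, so its integral over the symmetric domain $\R$ is zero. Since the Gaussian factor decays faster than any polynomial grows, the integrand is absolutely integrable, which justifies both the splitting and the symmetry argument. For the second integral, I would invoke Lemma \ref{250} with mean $0$, giving $\int_{\R} e^{-u^2/(2\sigma^2)}\, du = \sqrt{2\pi}\,\sigma$. Multiplying by $\mu$ yields $\mu\sqrt{2\pi}\,\sigma$, establishing the claimed identity.

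There is essentially no serious obstacle here, as this is the standard first-moment computation for a Gaussian. The only points that warrant a word of care are confirming absolute integrability (so that the integral may be split into a sum) and observing that the odd-symmetry argument requires the domain to be symmetric about the origin, which is precisely why the preliminary recentering $u = \beta - \mu$ is the natural opening move.
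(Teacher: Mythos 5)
Your proof is correct. The paper itself offers no proof of this lemma --- it is stated in Appendix B.1 as one of two ``well-known identities'' for the Gaussian, alongside the normalization result of Lemma \ref{250} --- so there is nothing to diverge from; your argument (translate by $u = \beta - \mu$, kill the odd term by symmetry after noting absolute integrability, and apply Lemma \ref{250} with mean $0$ to the remaining piece) is the standard computation and fills the omission cleanly.
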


\subsection{Analytic Integration of \texorpdfstring{$\beta$}{b}}
We denote the singular value decomposition of $X$ by
\begin{equation}\label{60}
X = UDV^t
\end{equation}
where $U$ is an orthogonal $n \times k$ matrix, $V$ is an orthogonal
$k \times k$ matrix, and $D$ is a $k \times k$ diagonal matrix. 
We define $z \in \R^k$ by the formula
\bb\label{70}
z = V^t\beta.
\ee
The following lemma, which will be used in the proof of Lemma \ref{170}, 
gives an expression for the second to last term of the exponent in 
(\ref{10}) after a change of variables.
\begin{lemma}\label{180}
For all $\beta \in \R^k$, and $y \in \R^n$,
\bb\label{90}
-\frac{1}{2\sigma_2^2}\| X\beta - y\|^2 = 
-\frac{y^ty}{2\sigma_2^2} 
+ \sum_{i=1}^k -\frac{\lambda_i^2}{2\sigma_2^2} z_i^2 
+ \frac{w_i}{\sigma_2^2}z_i
\ee
where 
\bb\label{110}
w = V^tX^ty,
\ee
$z$ is defined in (\ref{70}), and $\lambda_i$ is the $i^{th}$ entry on the 
diagonal of $D$ (see (\ref{60})). 
\end{lemma}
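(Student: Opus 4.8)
The plan is to reduce the identity to a direct algebraic expansion of the squared Euclidean norm followed by substitution of the singular value decomposition (\ref{60}) and use of the orthogonality relations. First I would expand
\[
\|X\beta - y\|^2 = \beta^t X^t X \beta - 2\, y^t X \beta + y^t y,
\]
which splits the quantity of interest into three pieces matching the three summands on the right-hand side of (\ref{90}): the constant term $y^t y$, a pure quadratic in $z$, and a linear term in $z$.

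For the quadratic piece, I would substitute $X = UDV^t$ to obtain $X^t X = V D U^t U D V^t = V D^2 V^t$, using that $U$ has orthonormal columns so that $U^t U = I$ and that $D$ is diagonal. Then, invoking the definition $z = V^t\beta$ from (\ref{70}),
\[
\beta^t X^t X \beta = (V^t\beta)^t D^2 (V^t\beta) = z^t D^2 z = \sum_{i=1}^k \lambda_i^2 z_i^2 .
\]
For the cross term, I would write $y^t X \beta = (X^t y)^t \beta$ and insert the identity $V V^t = I$ (valid since $V$ is orthogonal) to get $(V^t X^t y)^t (V^t \beta) = w^t z = \sum_{i=1}^k w_i z_i$, where $w = V^t X^t y$ as defined in (\ref{110}). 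Assembling the three pieces and multiplying through by $-\tfrac{1}{2\sigma_2^2}$ then distributes the sum and produces exactly the right-hand side of (\ref{90}).

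I do not anticipate a genuine obstacle here, since the argument is a routine expansion. The only point requiring care is the bookkeeping around the thin singular value decomposition — in particular that $U^t U = I$ rather than $U U^t = I$, and that the diagonal factor $D$ stays on the correct side when forming $X^t X$ and $X^t y$ — since these are precisely what cause the coefficients $\lambda_i^2$ and $w_i$ to emerge as stated.
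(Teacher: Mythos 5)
Your proposal is correct and follows essentially the same route as the paper's proof: expand $\|X\beta - y\|^2 = \beta^t X^t X \beta - 2y^t X\beta + y^t y$, substitute the SVD $X = UDV^t$ (using $U^t U = I$ for the thin factor), and insert $VV^t = I$ in the cross term to produce $w^t z$. Your explicit remark about $U^t U = I$ versus $UU^t = I$ makes precise a step the paper leaves implicit, but the argument is otherwise identical.
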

\begin{proof}
Clearly,

\bb\label{80}
\|X\beta - y\|^2 = \beta^tX^tX\beta - 2y^tX\beta + y^ty.
\ee

Substituting (\ref{60}) and (\ref{70}) into (\ref{80}), we obtain

\bb\label{160}
\begin{split}
\|X\beta - y\|^2 
&= \beta^t(UDV^t)^t(UDV^t)\beta - 2y^tXVV^t\beta + y^ty\\
&= (\beta^tV)D^2(V^t\beta) - 2y^t (V^t X^t)^t z + y^ty.
\end{split}
\ee
where $z$ is defined in (\ref{70}). Substituting (\ref{110}) and 
(\ref{70}) into (\ref{160}), we have
\bb\label{200}
\| X\beta - y\|^2 = z^tD^2z - 2w^tz + y^ty
\ee
Equation (\ref{90}) follows 
immediately from (\ref{200}). 
\end{proof}

The following lemma provides an equation for the last term 
of the exponent in (\ref{10}). The identity will be used in Lemma 
\ref{170}.

\begin{lemma}\label{190}
For all $\sigma_1>0$, 
\bb\label{120}
-\frac{\|\beta \|^2}{2\sigma_1^2} 
= \sum_{i=1}^k -\frac{z_i^2}{2\sigma_1^2} 
\ee
where $\beta \in \R^k$, $z$ is defined in (\ref{70}), and $V$ is defined in 
(\ref{60}). 
\end{lemma}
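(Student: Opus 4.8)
The plan is to exploit the single fact that $V$ is orthogonal, so that the change of variables $z = V^t\beta$ preserves the Euclidean norm. Since both sides of (\ref{120}) are the common scalar $-1/(2\sigma_1^2)$ times $\|\beta\|^2$ and $\sum_{i=1}^k z_i^2 = \|z\|^2$ respectively, it suffices to establish the single identity $\|z\|^2 = \|\beta\|^2$, after which the result follows by multiplying through by $-1/(2\sigma_1^2)$.

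First I would recall from (\ref{60}) that $V$ is a $k \times k$ orthogonal matrix, so that $V^tV = VV^t = I_k$. Then, substituting the definition (\ref{70}) of $z$, I would compute
\bb
\|z\|^2 = z^tz = (V^t\beta)^t(V^t\beta) = \beta^t V V^t \beta = \beta^t \beta = \|\beta\|^2.
\ee
Writing $\|z\|^2 = \sum_{i=1}^k z_i^2$ and dividing both sides of this equality by $-2\sigma_1^2$ immediately yields (\ref{120}), valid for every $\sigma_1 > 0$.

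There is no serious obstacle here: the entire content is the norm-invariance of an orthogonal map, and the only point that requires care is the explicit invocation of $VV^t = I_k$, which relies on $V$ being the \emph{square} right-singular-vector matrix as set up in (\ref{60}) rather than a tall factor. I would state that orthogonality hypothesis plainly at the outset so that the cancellation $\beta^t V V^t \beta = \beta^t\beta$ is justified. The remaining manipulations are purely formal, so the proof should be a few lines at most.
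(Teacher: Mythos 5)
Your proof is correct and takes essentially the same route as the paper: both arguments reduce the identity to norm-invariance under the orthogonal matrix $V$, the only cosmetic difference being that the paper substitutes $\beta = Vz$ and cancels via $V^tV = I_k$, while you expand $z = V^t\beta$ and cancel via $VV^t = I_k$. Your explicit remark that this requires $V$ to be the square $k \times k$ factor from (\ref{60}) is a fair point of care, and is consistent with the paper's setup.
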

\begin{proof}
Clearly, 
\bb\label{140}
\begin{split}
\frac{\|\beta \|^2}{2\sigma_1^2} 
&= \frac{1}{2\sigma_1^2}(Vz)^t(Vz) = \frac{z^tz}{2\sigma_1^2}
\end{split}
\ee
where $V$ is defined in (\ref{60}). 
Equation (\ref{120}) follows immediately from (\ref{140}). 
\end{proof}

The following formula combines Lemma \ref{180} and Lemma \ref{190}
to convert the final two terms of (\ref{10}) into 
a Gaussian in $k$ dimensions. 
\begin{lemma}\label{170}
\bb\label{330}
\hspace{-3em}
-\frac{\| X\beta - y\|^2}{2\sigma_2^2}
- \frac{\|\beta \|^2 }{2\sigma_1^2} 
= a_0 
+ \sum_{i=1}^k 
a_{2,i}(z_i - \frac{a_{1,i}}{2a_{2,i}})^2 + \frac{a_{1,i}^2}{4a_{2,i}}
\ee
where
\bb\label{340}
a_{2,i} = \frac{\lambda_i^2}{2\sigma_2^2} + \frac{1}{2\sigma_1^2},
\ee
\bb\label{350}
a_{1,i} = \frac{w_i}{\sigma_2^2}
\ee
and
\bb\label{360}
a_{0} = -\frac{y^ty}{2\sigma_2^2}
\ee
where
$z$ is defined in (\ref{70}), $w$ is defined in (\ref{110}) and $V$ is 
defined in (\ref{60}). 
\end{lemma}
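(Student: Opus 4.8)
The plan is to combine Lemma \ref{180} and Lemma \ref{190} additively and then complete the square in each coordinate $z_i$. Both lemmas have already rewritten the two quadratic forms in the common variable $z = V^t\beta$ as sums over the single index $i$, so the bulk of the work is done. First I would invoke Lemma \ref{180} to write $-\frac{1}{2\sigma_2^2}\|X\beta - y\|^2 = a_0 + \sum_{i=1}^k \left(-\frac{\lambda_i^2}{2\sigma_2^2}z_i^2 + \frac{w_i}{\sigma_2^2}z_i\right)$, noting that its constant term $-\frac{y^ty}{2\sigma_2^2}$ is exactly $a_0$ as defined in (\ref{360}). Then I would invoke Lemma \ref{190} to write $-\frac{\|\beta\|^2}{2\sigma_1^2} = \sum_{i=1}^k -\frac{z_i^2}{2\sigma_1^2}$. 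Since both right-hand sides are indexed identically, adding them simply merges coefficients termwise.

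After the addition, the coefficient of $z_i^2$ becomes $-\left(\frac{\lambda_i^2}{2\sigma_2^2} + \frac{1}{2\sigma_1^2}\right) = -a_{2,i}$, matching (\ref{340}), while the coefficient of $z_i$ is $\frac{w_i}{\sigma_2^2} = a_{1,i}$, matching (\ref{350}). Thus the left side equals $a_0 + \sum_{i=1}^k \left(-a_{2,i}z_i^2 + a_{1,i}z_i\right)$. The last step is to complete the square in each summand: for each fixed $i$ one has $-a_{2,i}z_i^2 + a_{1,i}z_i = -a_{2,i}\left(z_i - \frac{a_{1,i}}{2a_{2,i}}\right)^2 + \frac{a_{1,i}^2}{4a_{2,i}}$, and summing over $i$ and restoring $a_0$ gives the claimed right-hand side.

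There is no genuine analytic obstacle here; the argument is a routine consolidation of two previously established identities followed by a single completion of the square, and it is valid for every $\sigma_1,\sigma_2>0$. The only point demanding care is sign bookkeeping: the quadratic coefficient in the completed square must be $-a_{2,i}$ rather than $+a_{2,i}$, since both terms on the left tend to $-\infty$ as $\|z\|\to\infty$. This negative sign, together with $a_{2,i}>0$ for all $\sigma_1,\sigma_2>0$, is precisely what guarantees that the resulting integrand is a genuine (convergent) Gaussian in each $z_i$, which is what the subsequent marginalization in Formula \ref{290_0} exploits.
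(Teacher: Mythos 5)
Your proof is correct and follows essentially the same route as the paper's: combine Lemma \ref{180} and Lemma \ref{190} to get the intermediate identity (\ref{210}), then complete the square termwise in $z_i$. Your sign remark is also well taken --- the completed square must carry coefficient $-a_{2,i}$ (so the displayed $+a_{2,i}$ in (\ref{330}) is a typo in the statement), and this negative sign with $a_{2,i}>0$ is exactly what makes the integrand in Theorem \ref{290} a convergent Gaussian, consistent with the factor $\prod_{i=1}^k 1/\sqrt{2a_{2,i}}$ in (\ref{1270}).
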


\begin{proof}
By combining Lemma \ref{180} and Lemma \ref{190}, we have
\bb\label{210}
-\frac{1}{2\sigma_2^2}\| X\beta - y\|^2 - \frac{1}{2\sigma_1^2} 
\|\beta \|^2 = 
a_0 
+\sum_{i=1}^k 
\left( 
a_{1,i}z_i - a_{2,i}z_i^2
\right).
\ee
We obtain equation (\ref{330}) by completing the square in equation
(\ref{210}). 
\end{proof}

The following theorem is the principal analytical apparatus 
of this note. It provides a formula for the $k$-dimensional 
integrals that appear in (\ref{20}), (\ref{30}), and (\ref{40}).
\begin{theorem}\label{290}
For all $\sigma_1, \sigma_2>0$
\bb\label{310}
\int_{\R^k} q(\sigma_1, \sigma_2, \beta) d\beta = 
\tilde{q}(\sigma_1, \sigma_2)
\ee
where $\tilde{q}(\sigma_1, \sigma_2)$ is defined by the formula
\bb\label{1270}
\tilde{q}(\sigma_1, \sigma_2) = 
\sigma_1^{-(k+1)} 
\sigma_2^{-n} 
\exp\left( 
-\log^2(\sigma_1) - \frac{\sigma_2^2}{2} 
+a_0 + \sum_{i=1}^k \frac{a_{1,i}^2}{4a_{2,i}}
\right)
\sqrt{2\pi}^k \prod_{i=1}^k\frac{1}{\sqrt{2a_{2,i}}}
\ee
where $a_{2,i}$ is defined in (\ref{340}), $a_{1,i}$ is defined in 
(\ref{350}) and $a_0$ is defined in (\ref{360}). 
\end{theorem}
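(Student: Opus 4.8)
The plan is to reduce the $k$-dimensional integral in (\ref{310}) to a product of one-dimensional Gaussian integrals by exploiting the diagonalization already established in Lemma \ref{170}. First I would substitute the definition (\ref{10}) of $q$ into the left-hand side of (\ref{310}) and pull out of the integral every factor that does not depend on $\beta$, namely $\sigma_1^{-(k+1)}\sigma_2^{-n}\exp(-\gamma\log^2(\sigma_1) - \sigma_2^2/2)$. What remains inside the integral is $\exp\bigl(-\|X\beta-y\|^2/(2\sigma_2^2) - \|\beta\|^2/(2\sigma_1^2)\bigr)$, which is exactly the quantity rewritten by Lemma \ref{170}.

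Next I would apply (\ref{330}) to replace that exponent by $a_0 + \sum_{i=1}^k\bigl[-a_{2,i}(z_i - a_{1,i}/(2a_{2,i}))^2 + a_{1,i}^2/(4a_{2,i})\bigr]$, where $z = V^t\beta$. The terms $a_0$ and $\sum_i a_{1,i}^2/(4a_{2,i})$ are constant in $\beta$ and factor out as $\exp(a_0 + \sum_i a_{1,i}^2/(4a_{2,i}))$, leaving only $\int_{\R^k}\exp\bigl(-\sum_i a_{2,i}(z_i - a_{1,i}/(2a_{2,i}))^2\bigr)\,d\beta$ to evaluate.

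Then I would change variables from $\beta$ to $z = V^t\beta$. Because $V$ is orthogonal (see (\ref{60})), the Jacobian has absolute determinant $1$, so $d\beta = dz$ and the domain remains all of $\R^k$. The integrand is now a product over the coordinates $z_i$, so the integral factors as $\prod_{i=1}^k \int_\R \exp\bigl(-a_{2,i}(z_i - a_{1,i}/(2a_{2,i}))^2\bigr)\,dz_i$. I would evaluate each one-dimensional factor using Lemma \ref{250} with $\mu = a_{1,i}/(2a_{2,i})$ and $\sigma = 1/\sqrt{2a_{2,i}}$ (valid since $a_{2,i} > 0$ by (\ref{340})), obtaining $\sqrt{2\pi}/\sqrt{2a_{2,i}}$ per factor, hence $\sqrt{2\pi}^k\prod_i 1/\sqrt{2a_{2,i}}$ in total.

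Finally I would collect the prefactors pulled out in the first two steps together with this product to arrive at the expression (\ref{1270}) for $\tilde q(\sigma_1,\sigma_2)$. There is no deep obstacle here; the argument is essentially bookkeeping of constants. The one place that requires care is the sign of the quadratic coefficient: one must read (\ref{330}) with the coefficient $-a_{2,i}$, as produced by completing the square in (\ref{210}), since $a_{2,i} > 0$ and convergence of each Gaussian integral depends on this coefficient being negative. Matching $\sigma = 1/\sqrt{2a_{2,i}}$ against the normalization in Lemma \ref{250} is the only step where a slip in tracking factors of $2$ would propagate into the final prefactor.
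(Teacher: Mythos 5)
Your proposal is correct and follows essentially the same route as the paper's proof: substitute (\ref{10}), apply Lemma \ref{170}, change variables via the orthogonal matrix $V$, and evaluate the resulting product of one-dimensional Gaussians with Lemma \ref{250}. You even correctly flag that the quadratic coefficient in (\ref{330}) must be read as $-a_{2,i}$ for the integral to converge --- a sign typo the paper itself carries through (\ref{670}) without comment --- and you make the unit-Jacobian step explicit where the paper leaves it implicit.
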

\begin{proof}
Using (\ref{10}), clearly
\bb\label{410}
\hspace{-5em}
\int_{\R^k} q(\sigma_1, \sigma_2, \beta) d\beta = 
\sigma_1^{-(k+1)} 
\int_{\R^k} 
\exp\left( 
-\log^2(\sigma_1) - \frac{\sigma_2^2}{2} 
-\frac{1}{2\sigma_2^2}\|X\beta - y\|^2 - \frac{1}{2\sigma_1^2}\|\beta \|^2 
\right)
d\beta
\ee
Performing the change of variables (\ref{70}) and 
substituting (\ref{330}) into (\ref{410}), we have
\bb\label{670}
\hspace{-3em}
\begin{split}
\int_{\R^k} & q(\sigma_1, \sigma_2, \beta) d\beta = \\
& \exp\left(
-\log^2(\sigma_1) - \frac{\sigma^2}{2} 
+ a_0
+ \sum_{i=1}^k \frac{a_{1,i}^2}{4a_{2,i}}
\right) 
\int_{\R^k}
\exp\left(
\sum_{i=1}^k
a_{2,i}(z_i - \frac{a_{1,i}}{2a_{2,i}})^2
\right)
dz
\end{split}
\ee
Since the integrand on the right side of (\ref{670}) is a Gaussian
in $z_i$, equation (\ref{310}) follows from applying Lemma \ref{250} to 
(\ref{670}).
\end{proof}

The following theorem provides a formula for the expectation 
of $z$ (see (\ref{70})). We use this formula, in combination with
an orthogonal transformation, to obtain the expectation of $\beta$.
\begin{theorem}\label{380}
For all $\sigma_1>0$ and $\sigma_2 \in \R$,
\bb\label{390}
\int_{\R^k} (V^tx)_i q(\sigma_1, \sigma_2, \beta) d\beta = 
\frac{a_{1,i}}{2a_{2,i}}
\tilde{q}(t)
\ee
where $q$ is defined in (\ref{10}), $\tilde{q}$ is defined in (\ref{1270}),
$a_{2,i}$ is defined in (\ref{340}), $a_{1,i}$ is defined in 
(\ref{350}), $a_0$ is defined in (\ref{360}). 
\end{theorem}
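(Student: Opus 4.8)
The plan is to mirror the proof of Theorem~\ref{290} almost verbatim, the only change being that one of the $k$ Gaussian factors now carries an extra linear factor $z_i$, so its one-dimensional integral is governed by the first-moment identity (Lemma~\ref{270}) rather than the normalization identity (Lemma~\ref{250}). I read $(V^tx)_i$ in (\ref{390}) as $(V^t\beta)_i = z_i$, consistent with Formula~\ref{380_0}, and read $\tilde{q}(t)$ as $\tilde{q}(\sigma_1,\sigma_2)$.

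First I would perform the change of variables $z = V^t\beta$ of (\ref{70}) in the integral $\int_{\R^k} z_i\, q(\sigma_1,\sigma_2,\beta)\, d\beta$. Since $V$ is orthogonal, $|\det V^t| = 1$, so $d\beta = dz$ and the integral becomes an integral over $z \in \R^k$. Substituting Lemma~\ref{170} for the two quadratic terms in the exponent of $q$, exactly as in the passage from (\ref{410}) to (\ref{670}), the integrand factors into the $\sigma_1,\sigma_2$-dependent prefactor $\sigma_1^{-(k+1)}\sigma_2^{-n}\exp\left(-\log^2\sigma_1 - \sigma_2^2/2 + a_0 + \sum_j a_{1,j}^2/(4a_{2,j})\right)$ times the separable Gaussian integrand $z_i \prod_{j=1}^k \exp\left(-a_{2,j}(z_j - a_{1,j}/(2a_{2,j}))^2\right)$.

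Next I would evaluate the separable integral coordinate by coordinate. For every index $j \neq i$, the one-dimensional integral is handled by Lemma~\ref{250} with $\sigma = (2a_{2,j})^{-1/2}$, contributing the factor $\sqrt{2\pi}/\sqrt{2a_{2,j}}$. For the distinguished index $i$, the extra $z_i$ means I instead apply Lemma~\ref{270} with mean $\mu = a_{1,i}/(2a_{2,i})$ and the same $\sigma = (2a_{2,i})^{-1/2}$, yielding $(a_{1,i}/(2a_{2,i}))\cdot \sqrt{2\pi}/\sqrt{2a_{2,i}}$. Multiplying all $k$ factors together, the product $\sqrt{2\pi}^k \prod_{j=1}^k (2a_{2,j})^{-1/2}$ is identical to the tail of (\ref{1270}), and the mean $a_{1,i}/(2a_{2,i})$ pulls out cleanly. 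Collecting the prefactor with this product reconstructs $\tilde{q}(\sigma_1,\sigma_2)$ exactly as defined in (\ref{1270}), leaving precisely $\frac{a_{1,i}}{2a_{2,i}}\tilde{q}(\sigma_1,\sigma_2)$.

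There is no genuine analytic obstacle here; the argument is a one-line modification of Theorem~\ref{290}, swapping the zeroth-moment identity for the first-moment identity in a single factor. The only point demanding care is the bookkeeping: confirming that every factor other than the isolated mean $a_{1,i}/(2a_{2,i})$ reassembles into $\tilde{q}$. This requires matching the parameterization of Lemma~\ref{270} (so that $1/(2\sigma^2) = a_{2,i}$, i.e. $\sigma = (2a_{2,i})^{-1/2}$) and verifying that the $j \neq i$ factors, together with the distinguished factor's normalization, produce exactly the $\sqrt{2\pi}^k\prod_{j=1}^k(2a_{2,j})^{-1/2}$ appearing in (\ref{1270}).
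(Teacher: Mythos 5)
Your proposal is correct and follows essentially the same route as the paper: after the change of variables $z=V^t\beta$ and substitution of Lemma \ref{170} (i.e.\ the passage to the form (\ref{1190})), the paper likewise disposes of the distinguished coordinate via the first-moment identity of Lemma \ref{270}, with the remaining coordinates absorbed into $\tilde{q}$ exactly as in Theorem \ref{290}. Your reading of the statement's typos ($(V^tx)_i$ as $(V^t\beta)_i=z_i$ and $\tilde{q}(t)$ as $\tilde{q}(\sigma_1,\sigma_2)$), and your silent correction of the sign in the exponent of (\ref{330}), match the paper's evident intent; your only divergence is spelling out the per-coordinate bookkeeping that the paper leaves implicit.
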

\begin{proof}
Combining (\ref{670}) and (\ref{70}), we have 
\bb\label{1190}
\hspace{-3em}
\begin{split}
\int_{\R^k} & (V^t\beta)_i q(\sigma_1, \sigma_2, \beta) d\beta = \\
& 
\exp\left(
-\log^2(\sigma_1) - \frac{\sigma_2^2}{2} 
+ a_0 
+ \sum_{i=1}^k \frac{a_{1,i}^2}{4a_{2,i}}
\right)
\int_{\R^k} 
z_i\exp\left(
\sum_{i=1}^k 
a_{2,i}(z_i - \frac{a_{1,i}}{2a_{2,i}})^2
\right)
dz.
\end{split}
\ee
Applying Lemma \ref{270} to (\ref{1190}), we obtain (\ref{390}). 
\end{proof}

\end{document}